\renewcommand{\Im}{\operatorname{Im}}
\newcommand{\ud}{\,\mathrm{d}}
\newcommand{\Or}{\mathcal{O}}
\providecommand{\definitionname}{Definition}
\providecommand{\assumptionname}{Assumption}
\providecommand{\corollaryname}{Corollary}
\providecommand{\lemmaname}{Lemma}
\providecommand{\propositionname}{Proposition}
\providecommand{\remarkname}{Remark}
\providecommand{\theoremname}{Theorem}
\newtheorem{thm}{\protect\theoremname}
\newtheorem{lem}[thm]{\protect\lemmaname}
\newtheorem{prop}[thm]{\protect\propositionname}
\newtheorem{assumption}[thm]{\protect\assumptionname}
\newtheorem{defn}[thm]{\protect\definitionname}
\newtheorem{res}[thm]{Result}
\tikzset{%
  highlight/.style={rectangle,rounded corners,fill=blue!15,draw,fill opacity=0.3,thick,inner sep=0pt}
}
\begin{document}

\title{Fourier transform-based linear combination of Hamiltonian simulation}
\author[1]{Xi Huang}
\author[2]{Dong An\thanks{dongan@pku.edu.cn}}

\affil[1]{School of Stomatology, Peking University, Beijing, 100081, China}
\affil[2]{Beijing International Center for Mathematical Research, Peking University, Beijing, 100871, China}
\date{\today}
\maketitle

\begin{abstract}
    Linear combination of Hamiltonian simulation (LCHS) connects the general linear non-unitary dynamics with unitary operators and serves as the mathematical backbone of designing near-optimal quantum linear differential equation algorithms. 
    However, the existing LCHS formalism needs to find a kernel function subject to complicated technical conditions on a half complex plane. 
    In this work, we establish an alternative formalism of LCHS based on the Fourier transform. 
    Our new formalism completely removes the technical requirements beyond the real axis, providing a simple and flexible way of constructing LCHS kernel functions. 
    Specifically, we construct a different family of the LCHS kernel function, providing a $1.81$ times reduction in the quantum differential equation algorithms based on LCHS, and an $8.27$ times reduction in its quantum circuit depth at a truncation error of $\epsilon \le 10^{-8}$. 
    Additionally, we extend the scope of the LCHS formula to the scenario of simulating linear unstable dynamics for a short or intermediate time period. 
\end{abstract}

% \tableofcontents

\section{Introduction}

Differential equations are fundamental mathematical tools in modeling a wide range of dynamics, serving as the backbone of mathematical descriptions in physics, chemistry, engineering, economics, and beyond.  
In this work, we consider the ordinary differential equations (ODEs)\footnote{A more general form of linear ODEs is the inhomogeneous ODE $\ud u(t)/\ud t = -A(t)u(t) + b(t)$ for a vector $b(t)$. Notice that as long as we can efficiently simulate its homogeneous version, the general inhomogeneous ODE can be simulated using Duhamel's principle. So in this work we focus on homogeneous ODEs. }
\begin{equation}\label{eqn:def_ODE}
    \frac{\ud u(t)}{\ud t} = -A(t) u(t), \quad u(0) = u_0. 
\end{equation}
Here $A(t) \in \mathbb{C}^{N\times N}$ is the time-dependent coefficient matrix with the Cartesian decomposition~\cite[Chapter I]{Bhatia1997}
\begin{equation}\label{eqn:A_cartesian}
    A(t) = L(t) + iH(t), \quad L(t) = \frac{A(t)+A(t)^{\dagger}}{2}, \quad H(t) = \frac{A(t)-A(t)^{\dagger}}{2i}, 
\end{equation}
and the Hermitian matrices $L(t)$ and $H(t)$ are called the \emph{real} and \emph{imaginary} part of $A(t)$, respectively. 
$u_0 \in \mathbb{C}^{N}$ is the initial condition. 
Linear ODEs usually lie in very high dimensions for accurately modeling complex dynamics in practice. 
Such high dimensionality limits the computational efficiency of classical algorithms and motivates the design of quantum algorithms with the exploration of potential quantum advantage. 

The first quantum algorithm for linear ODEs was proposed by Berry~\cite{Berry2014}. 
The algorithm discretizes~\cref{eqn:def_ODE} by a multi-step method, and solves the resulting linear system of equations by the HHL algorithm~\cite{HarrowHassidimLloyd2009}. 
Since then, there has been significant progress on the design of more efficient quantum algorithms for linear ODEs, either following the same methodology of~\cite{Berry2014} with higher-order time discretization and more advanced quantum linear system algorithms~\cite{BerryChildsOstranderEtAl2017,ChildsLiu2020,Krovi2022,BerryCosta2022,LowSu2024quantumlinearalgorithmoptimal,DongLiXue2025,WuLi2025}, or directly constructing the time-evolution operator of the ODE~\cite{FangLinTong2023,JinLiuYu2022,AnLiuLin2023,AnChildsLin2023,LowSu2024,ShangGuoAnZhao2024,JinLiuMaYu2025,Li2025,FangGeorgeTong2025}. 
Among those, several quantum algorithms~\cite{AnChildsLin2023,LowSu2024,ShangGuoAnZhao2024,JinLiuMaYu2025} have achieved near-optimal complexity, which scales almost linearly in the evolution time $T$ and poly-logarithmically in the error $\epsilon$. 

One of the quantum ODE algorithms with near-optimal complexity is the linear combination of Hamiltonian simulation (LCHS) method~\cite{AnLiuLin2023,AnChildsLin2023}. 
Under the stability condition $L(t) \succeq 0$, the LCHS formalism represents the time evolution operator of~\cref{eqn:def_ODE} as 
\begin{equation}\label{eqn:LCHS_intro}
    \mathcal{T} e^{-\int_0^t A(s) \ud s} = \int_{\mathbb{R}} \frac{f(k)}{ 1-ik} \mathcal{T} e^{-i \int_0^t (kL(s)+H(s)) \ud s} \ud k, 
\end{equation}
where $f(k)$ is a kernel function subject to certain conditions that will be discussed later.  
Intuitively,~\cref{eqn:LCHS_intro} shows that the time evolution operator $\mathcal{T} e^{-\int_0^t A(s) \ud s}$ is a continuous weighted summation of a class of parameterized unitary operator $\mathcal{T} e^{-i \int_0^t (kL(s)+H(s)) \ud s}$, the time-evolution operator governed by the Hamiltonian $kL(s)+H(s)$. 
Therefore, we can construct an efficient quantum ODE algorithm based on a discretized version of~\cref{eqn:LCHS_intro}, by implementing each unitary $\mathcal{T} e^{-i \int_0^t (kL(s)+H(s)) \ud s}$ with efficient Hamiltonian simulation algorithms such as the truncated Dyson series method~\cite{BerryChildsCleveEtAl2015,LowWiebe2019}, and combining them by the quantum linear combination of unitaries (LCU) method~\cite{ChildsWiebe2012,ChildsKothariSomma2017}. 

In the first version of LCHS in~\cite{AnLiuLin2023}, the kernel function is chosen to be $f(k) = \frac{1}{\pi (1+ik)}$, which results in a sub-optimal dependence on the error $\epsilon$. 
This is because the weight $\frac{f(k)}{1-ik}$ decays only quadratically, so we need to truncate the integral in~\cref{eqn:LCHS_intro} at a very large parameter $K = \Or(1/\epsilon)$, and the cost in Hamiltonian simulation increases significantly. 
The subsequent work~\cite{AnChildsLin2023} broadens the scope of LCHS by showing that there is a wider family of kernel functions that can make the LCHS formula in~\cref{eqn:LCHS_intro} hold, and some of them decay much faster. 
More specifically, it gives an explicit choice of the kernel function $f(k) = \frac{1}{2\pi e^{-2^{\beta}} e^{(1+ik)^{\beta}} }$ with $\beta \in (0,1)$ to achieve near-optimal complexity. 

Nevertheless, the technical conditions for the kernel function $f(k)$ established in~\cite{AnChildsLin2023} appear to be complicated and counterintuitive. 
Specifically,~\cite{AnChildsLin2023} requires the kernel function $f(k)$ to be defined not only along the real axis, but also on the entire complex plane (denoted as $f(z)$ for $z \in \mathbb{C}$), and the function $f(z)$ needs to be analytic and decay mildly in the entire lower half plane. 
It seems intuitively peculiar why we need the conditions of $f(z)$ on the complex plane, given the fact that the integral in the LCHS formula~\cref{eqn:LCHS_intro} is only along the real axis. 
Furthermore, such technically complicated conditions pose difficulties in constructing other computable kernel functions to further speed up its decay and improve the efficiency of the LCHS algorithm. 
Additionally, the framework of LCHS in~\cite{AnChildsLin2023} only holds for stable ODEs where the real part $L(t)$ is positive semi-definite. 
This limits the applicable regime of LCHS as many ODEs in practice might be only asymptotically stable or even unstable at least during a sub-interval. 

In this work, we overcome these difficulties by introducing a Fourier transform-based version of LCHS (FT-LCHS). 
We show that as long as the Fourier transform of the function $\frac{f(k)}{1-ik}$ is $e^{-x}$ for $x \geq 0$ with a mild regularity condition on the real axis, the LCHS formula will hold for $L(t) \succeq 0$. 
The FT-LCHS completely removes the requirements of the kernel function beyond the real axis, and provides a simple and flexible way of constructing the kernel function: we can first smoothly extend the function $e^{-x}$ from the positive real axis to the entire real axis, and then take the inverse Fourier transform to obtain $\frac{f(k)}{1-ik}$. 
Using this strategy, we construct a different kernel function $f(k)$ with faster decay. 
This allows us to reduce the total query complexity of the quantum LCHS algorithm by a factor of $1.81$, and reduce the circuit depth of the LCHS algorithm by a factor of $8.27$ at a truncation error of $\epsilon \le 10^{-8}$. 
Additionally, we find that FT-LCHS can be extended to short-time simulation of unstable ODEs where $L(t)$ has negative eigenvalues, if we choose the kernel function from the inverse Fourier transform of $e^{-x}$ with $x \geq -a$ for a sufficiently large $a$. 

\subsection{Main results} 

The main result of this work is the following FT-LCHS formalism. 

\begin{res}[Informal version of~\cref{thm:f_satisfied}]\label{thm:FTLCHS_intro}
    Let $f(k)$ be a function defined on $k \in \mathbb{R}$, and the Fourier transform of $\frac{f(k)}{1-ik}$ be denoted as $F(x)$. 
    If 
    \begin{enumerate}
        \item $F(x) = e^{-x}$ when $x \geq 0$, and 
        \item $F(x) \in C^2(\mathbb{R}) \cap W^{2,1}(\mathbb{R})$\footnote{Here $C^2(\mathbb{R})$ denotes the set of second-order continuously differentiable functions on the real axis, and $W^{2,1}(\mathbb{R})$ denotes the set of functions on the real axis such that the function and its derivatives up to second order are integrable. }. 
    \end{enumerate}
    Then, for any $A(s) = L(s) + i H(s)$ where $L(s) \succeq 0$ for all $0 \leq s \leq t$, we have 
    \begin{equation}
    \mathcal{T} e^{-\int_0^t A(s) \ud s} = \int_{\mathbb{R}} \frac{f(k)}{ 1-ik} \mathcal{T} e^{-i \int_0^t (kL(s)+H(s)) \ud s} \ud k. 
\end{equation}
\end{res}

As discussed earlier, the contribution of~\cref{thm:FTLCHS_intro} is that it completely removes the requirements of the kernel function on the complex plane. 
Furthermore, notice that the first condition of $F(x)$ in~\cref{thm:FTLCHS_intro} is indeed a necessary condition for the LCHS formula to be true. 
This can be observed by setting $A(t) \equiv 1$ and $t = x$ in the LCHS formula. 
Therefore, our~\cref{thm:FTLCHS_intro} also suggests that this necessary condition related to the Fourier transform is almost a sufficient condition as well (subject to a mild regularity condition on the derivatives of the Fourier transform up to the second order). 

However, we remark that~\cref{thm:FTLCHS_intro} does not enlarge the function space of the kernel functions defined in~\cite{AnChildsLin2023}. 
In fact, we prove~\cref{thm:FTLCHS_intro} by showing that the kernel function with conditions in~\cref{thm:FTLCHS_intro} can be analytically extended to the lower half plane and fall into the class in~\cite{AnChildsLin2023}. 
So the function class defined in~\cref{thm:FTLCHS_intro} is just a subset of those in the previous work~\cite{AnChildsLin2023}. 
Nevertheless,~\cref{thm:FTLCHS_intro} yields a practical way to construct the kernel function, whose validity is not obvious from the previous work~\cite{AnChildsLin2023}: 
\begin{enumerate}
    \item Construct a function $F(x)$ by extending $e^{-x}$ from positive real axis to the entire real axis with sufficient smoothness and integrability (i.e., in $C^2(\mathbb{R}) \cap W^{2,1}(\mathbb{R})$). 
    \item Compute the inverse Fourier transform of $F(x)$ to obtain $\frac{f(k)}{1-ik}$. 
\end{enumerate}

Following this approach, we can construct a new kernel function in~\cref{sec:kernel_construction}. 
Although it is not possible to construct a kernel function with asymptotically faster decay than the previous work due to~\cite[Proposition 7]{AnChildsLin2023}, we show that our new kernel function can decay faster than the best previous one up to a constant factor of $8.27$. 
As a result, when we truncate the infinite integral in the LCHS formula to a finite interval $[-K,K]$ for algorithm design, we can choose a much smaller value of the truncation parameter $K$. 
This reduces the quantum circuit depth of the hybrid implementation of the LCHS algorithm by $8.27$ times. 
The total query complexity of the quantum LCHS algorithm depends on the product of the truncation parameter $K$ and the integral value, $\int_{-\infty}^{+\infty} |f(k)/(1-ik)| \ud k$, and our new kernel function can offer a $1.81$ times reduction. 
We summarize the above discussions as follows. 

\begin{res}
    We can construct a new kernel LCHS function such that, at a truncation error of  $\epsilon \le 10^{-8}$, compared with the previous result in~\cite{AnChildsLin2023},  
    \begin{enumerate}
        \item the quantum circuit depth of the LCHS algorithm can be reduced by $8.27$ times, and 
        \item the total quantum query complexity of the LCHS algorithm can be reduced by $1.81$ times. 
    \end{enumerate}
\end{res}

Another application of our FT-LCHS is to extend LCHS to unstable ODEs where the real part $L(t)$ is not positive semi-definite. 
The idea is to force $F(x)$, the Fourier transform of the kernel function $\frac{f(k)}{1-ik}$, to be $e^{-x}$ for a sufficiently large regime of $x$. 
Specifically, in the worst case, the time-evolution operator $\mathcal{T} e^{-\int_0^t A(s) \ud s}$ can be as large as $e^{\lambda_0 t}$, where we assume $(-\lambda_0)$ is a lower bound of the eigenvalues of $L(s)$. 
So it is sufficient to let $F(x) = e^{-x}$ for all $x \geq - \lambda_0 t$, and the LCHS formula can still hold. 

\begin{res}[Informal version of~\cref{thm:fbLCHS-extended}]\label{thm:unstable_intro}
    Let the coefficient matrix to be $A(s) = L(s) + i H(s)$ where $L(s) \succeq -\lambda_0$ for all $0 \leq s \leq T$, where $\lambda_0$ is a non-negative constant and $T$ is the final evolution time. 
    Suppose that the kernel function $f(k)$ satisfies all the conditions in~\cref{thm:FTLCHS_intro} and, additionally, the Fourier transform of $\frac{f(k)}{1-ik}$ is also $e^{-x}$ when $x \geq -\lambda_0 T$. 
    Then, for any $0 \leq t \leq T$, we have 
    \begin{equation}
    \mathcal{T} e^{-\int_0^t A(s) \ud s} = \int_{\mathbb{R}} \frac{f(k)}{ 1-ik} \mathcal{T} e^{-i \int_0^t (kL(s)+H(s)) \ud s} \ud k. 
\end{equation}
\end{res}

Notice that our~\cref{thm:unstable_intro} is valid only on a finite time interval $[0,T]$, unlike the stable case where LCHS holds for any time $t \geq 0$. 
This result cannot be extended to infinite time when $\lambda_0 > 0$, because otherwise $F(x)$ would be $e^{-x}$ for all $x \in \mathbb{R}$ and no longer has an inverse Fourier transform. 

A quantum algorithm for unstable ODEs can be constructed by discretizing the LCHS formula in~\cref{thm:unstable_intro} and then applying the truncated Dyson series method and LCU. 
However, the overall quantum query complexity scales exponentially in the evolution time $T$. 
This is because the query complexity in the LCU step (with amplitude amplification) depends linearly on the $1$-norm of the coefficients, which is roughly $\int_{\mathbb{R}} |f(k)/(1-ik)| \ud k = \Omega(e^{\lambda_0 T})$ (this lower bound can be obtained by taking $A(s) \equiv - \lambda_0$ and $t = T$). 
Therefore, for simulating long-time unstable dynamics, this extended FT-LCHS algorithm is not efficient, which is expected and consistent with the worst-case lower bound in~\cite{AnLiuWangEtAl2023}. 
Nevertheless, our~\cref{thm:unstable_intro} still suggests an efficient method for solving unstable ODEs within a short or intermediate time period when $T = \mathcal{O}(1/\lambda_0)$. 

\subsection{Additional result}

In this work, we also investigate the limitation on the asymptotic decay rate of the LCHS kernel function. 
It has been shown in~\cite[Proposition 7]{AnChildsLin2023} that the LCHS kernel function cannot decay exponentially $\sim e^{-\gamma |k|}$ for a positive number $\gamma$. 
However, similar to the previous LCHS formalism, this lower bound is also established only for analytic functions on the lower half complex plane. 
From there, it remained unclear whether or not there exists a kernel function which is not analytic but satisfies the LCHS formula and decays exponentially. 

In~\cref{sec:constraint}, we show that such a kernel function does not exist either. 
Specifically, we show that as long as the LCHS formula holds, and the function $\frac{f(k)}{1-ik}$ is integrable, then such a kernel function cannot decay exponentially. 
Notice that the integrality of $\frac{f(k)}{1-ik}$ is a necessary condition for an efficient quantum algorithm based on LCU, so our new lower bound completely rules out the possibility of constructing a kernel function with asymptotic decay faster than~\cite{AnChildsLin2023}, even with some unknown approach beyond FT-LCHS.

\subsection{Related works}

After~\cite{AnChildsLin2023}, there have been a few works on further improving or extending the scope of LCHS, such as the LCHS-based quantum eigenvalue transform algorithm~\cite{AnChildsLinYing2024} and the infinite-dimensional LCHS~\cite{LuLiLiuLiu2025}. 
A closely related work~\cite{PocrnicJohnsonKatabarwaWiebe2025} proves  constant factor bounds for LCHS, and provides a $110$ times reduction in runtime cost of the quantum algorithm. 
Compared to~\cite{PocrnicJohnsonKatabarwaWiebe2025}, our technique for improving the constant factor of LCHS is independent and parallel. 
The work~\cite{PocrnicJohnsonKatabarwaWiebe2025} uses the same kernel function as in~\cite{AnChildsLin2023} and the improvements are algorithmic, including a better discretization of the LCHS integral, a more efficient quantum implementation of the LCU step, and a better oblivious amplitude amplification technique, while our work gives a better kernel function. 
Therefore, we could combine our new kernel function with the techniques in~\cite{PocrnicJohnsonKatabarwaWiebe2025} to obtain an even better constant factor in the quantum ODE algorithms based on LCHS.

Besides, we would like to mention the works~\cite{JinLiuMaYu2025,JinLiuMa2024} on improving the Schr\"odingerization method for linear ODEs. 
Schr\"odingerization and LCHS are two closely related frameworks, in the sense that they share similar underlying mathematical principle but construct the algorithms from different perspectives. 
The work~\cite{JinLiuMaYu2025} yields the connection between Schr\"odingerization and Fourier transform, and the work~\cite{JinLiuMa2024} discusses how to extend Schr\"odingerization for unstable dynamics. 
Our work provides a complementary understanding on these issues in the LCHS framework.

\subsection{Discussions and open questions}

In this work, we establish a Fourier transform based LCHS framework, which provides a constant factor improvement in the LCHS algorithm and gives a quantum algorithm for unstable ODEs. 
We also show that as long as the LCHS formula holds, its kernel function cannot decay exponentially. 
This implies that the quantum ODE algorithm based on LCHS cannot achieve optimal scaling (i.e. $\mathcal{O}(\log(1/\epsilon))$) in terms of error $\epsilon$. 
During the final stage of this work, we became aware of a concurrent work~\cite{LowSomma2025}, which establishes a generalized approximate version of LCHS, enabling a kernel function with exponential decay (and thus a quantum ODE algorithm with optimal precision dependence) as well as constant factor improvements. 

An interesting and practically relevant open question is to construct even better kernel functions by combining our FT-LCHS formalism with optimization. 
Specifically, we can try to find an optimizer $f(k)$ in $C^2(\mathbb{R}) \cap W^{2,1}(\mathbb{R})$ subject to the conditions that the Fourier transform of $\frac{f(k)}{1-ik}$ is $e^{-x}$ when $x \geq 0$ and the integral truncation error is bounded by a prefixed $\epsilon$, to minimize the truncation parameter $K$ (for a shorter quantum circuit depth) or the quantity $K \int_{-K}^{K} |f(k)/(1-ik)| \ud k$ (for a lower total quantum query complexity).

\subsection{Organization}

The rest of this paper is organized as follows. 
We first review the existing results of LCHS in~\cref{sec:prelim}. 
Then, we state and prove our FT-LCHS framework in~\cref{sec:fbLCHS_semi}, and show its applications on finding constantly better kernel functions and simulating unstable dynamics in~\cref{sec:app}. 
In~\cref{sec:constraint}, we discuss the constraint of asymptotic decay of the LCHS kernel functions.

\section{Preliminary: existing linear combination of Hamiltonian simulation}\label{sec:prelim}

In this section, we briefly review the existing results of LCHS proposed in~\cite{AnChildsLin2023}. 
The main LCHS formula is given in the following lemma. 

\begin{lem}[{\cite[Theorem 5]{AnChildsLin2023}}]\label{lem:LCHS}
    Let $f(z)$ be a function of $z \in \mathbb{C}$ such that 
    \begin{enumerate} 
        \item (Analyticity) $f(z)$ is analytic on the lower half plane \(\{z: \operatorname{Im}(z) < 0 \}\) and continuous on \(\{z: \operatorname{Im}(z) \leq 0 \}\), 
        \item (Decay) there exists a parameter $\alpha > 0$ such that $|z|^\alpha |f(z)| \leq \widetilde{C}$ for a constant $\widetilde{C}$ when $\Im(z) \leq 0$, and 
        \item (Normalization) $\int_{\mathbb{R}} \frac{f(k)}{ 1-ik} \ud k = 1$. 
    \end{enumerate}
    Let $A(t) \in \mathbb{C}^{N\times N}$ be decomposed according to~\cref{eqn:A_cartesian}. 
    If $L(s)\succeq 0$ for all $0\le s\le t$, then
\begin{equation}\label{eqn:LCHS}
    \mathcal{T} e^{-\int_0^t A(s) \ud s} = \int_{\mathbb{R}} \frac{f(k)}{ 1-ik} \mathcal{T} e^{-i \int_0^t (kL(s)+H(s)) \ud s} \ud k. 
\end{equation}
\end{lem}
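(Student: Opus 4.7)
The plan is a direct contour-deformation argument applied to the integrand $g(z) := \frac{f(z)}{1-iz}\,V(t,z)$, where $V(t,z) := \mathcal{T}e^{-i\int_0^t (zL(s)+H(s))\,\ud s}$ is the analytic continuation in $z$ of the unitary propagator appearing on the right-hand side of~\cref{eqn:LCHS}. The three hypotheses on $f$ are exactly what is needed to shift the integration contour from the real line down into $\{\Im z < 0\}$ and pick up the unique simple pole at $z=-i$, where $1-iz$ vanishes and the residue will reproduce the desired time-ordered exponential.

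First I would verify that $V(t,\cdot)$ extends to an entire function of $z$ satisfying $\|V(t,z)\|\leq 1$ throughout the closed lower half plane. The key structural point is that for $z=k-iy$ with $y\geq 0$ the infinitesimal generator $-i(zL(s)+H(s)) = -i(kL(s)+H(s)) - yL(s)$ has Hermitian part $-yL(s)\preceq 0$ by the assumption $L(s)\succeq 0$, making each infinitesimal step a contraction. The second preparatory step is to evaluate at $z=-i$: the generator becomes $-i((-i)L(s)+H(s)) = -L(s)-iH(s) = -A(s)$, so $V(t,-i) = \mathcal{T}e^{-\int_0^t A(s)\,\ud s}$, which is precisely the left-hand side of~\cref{eqn:LCHS}.

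With those facts in hand, I would apply the residue theorem on a rectangle with vertices $\pm R$ and $\pm R - iY$ traversed clockwise. The only singularity enclosed is the simple pole of $\tfrac{1}{1-iz}$ at $z=-i$, whose residue equals $i\,f(-i)\,V(t,-i)$, so the contour integral evaluates to $2\pi f(-i) V(t,-i)$. Sending $R\to\infty$, the vertical edges vanish by the polynomial decay of $f$; sending then $Y\to\infty$, the shifted lower horizontal edge vanishes because $\|V(t,z)\|\leq 1$ combined with $|f(z)|\leq \widetilde{C}|z|^{-\alpha}$ yields an integrand that contributes $O(Y^{-\alpha})$ after integration in $x$. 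What remains is
\begin{equation*}
    \int_{\mathbb{R}} \frac{f(k)}{1-ik}\,V(t,k)\,\ud k = 2\pi\,f(-i)\,\mathcal{T}e^{-\int_0^t A(s)\,\ud s}.
\end{equation*}
Specializing to the case $V\equiv I$ (equivalently, setting $t=0$) and invoking the normalization hypothesis forces $2\pi f(-i)=1$, which closes the argument.

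The most delicate step, and the main obstacle, is establishing the uniform bound $\|V(t,z)\|\leq 1$ on the closed lower half plane in the genuinely time-dependent, non-commuting setting. In the time-independent case this is just the Lumer--Phillips criterion for dissipative generators, but here one needs either a Gr\"onwall estimate applied to $\frac{\ud}{\ud t}\|V(t,z)v\|^2$ for test vectors $v$, or a careful passage to the limit of contractive first-order product-formula approximants, to verify that contractivity survives the time ordering. Once that bound is in hand, the decay assumption on $f$ handles the vanishing of both the vertical and the lower horizontal edges, and the rest of the argument is standard contour calculus.
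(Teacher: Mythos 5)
This paper does not actually prove \cref{lem:LCHS}; it imports the statement from \cite{AnChildsLin2023}, so your attempt can only be measured against the proof given there. Your argument is correct, and it takes a genuinely more direct route. The proof in \cite{AnChildsLin2023} differentiates the right-hand side of \cref{eqn:LCHS} in $t$, uses the identity $\frac{-ik}{1-ik}=1-\frac{1}{1-ik}$ to reduce the claim to the vanishing of $\int_{\mathbb{R}} f(k)\,\mathcal{T}e^{-i\int_0^t(kL(s)+H(s))\ud s}\,\ud k$ (established by closing the contour in the lower half plane, where the integrand has no pole), and then concludes via uniqueness of ODE solutions plus the normalization condition for the initial value. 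You instead deform the contour of the full weighted integral and collect the residue of the simple pole of $1/(1-iz)$ at $z=-i$, where the continued propagator degenerates to exactly $\mathcal{T}e^{-\int_0^t A(s)\ud s}$, and you recover $2\pi f(-i)=1$ from the $t=0$ case together with normalization; your residue and orientation bookkeeping is right. Both routes rest on the same two pillars --- the entire continuation $V(t,z)$ with $\|V(t,z)\|\le 1$ on $\Im(z)\le 0$, and contour deformation enabled by conditions 1 and 2 --- but a pleasant feature of your version is that the retained factor $1/(1-iz)$ makes the integrand $\mathcal{O}(|z|^{-1-\alpha})$ on the shifted edges, so everything vanishes for any $\alpha>0$ without extra effort. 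Two points to tighten: (i) the contraction bound you flag as the main obstacle is in fact a one-line computation, $\frac{\ud}{\ud t}\|V(t,z)v\|^2=2\,\Im(z)\,\langle V(t,z)v,\,L(t)V(t,z)v\rangle\le 0$ for $\Im(z)\le 0$ and $L(t)\succeq 0$, which holds verbatim in the time-dependent, non-commuting setting (the anti-Hermitian part of the generator contributes nothing to the real part); (ii) the top edge of your rectangle lies on the real axis, where $f$ is only assumed continuous rather than analytic, so you should apply the residue theorem to a rectangle with top edge at $\Im(z)=-\epsilon$ and pass to the limit $\epsilon\to 0^+$ using continuity of $f$ on the closed half plane and dominated convergence.
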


Ref.~\cite{AnChildsLin2023} gives several choices of the kernel function with closed-form expression, and the one with the fastest asymptotic decay is 
\begin{equation}\label{eqn:kernel_intro}
    f_{\beta}(z) = \frac{1}{2\pi e^{-2^\beta} e^{(1+iz)^{\beta}} }, \quad \beta \in (0,1).
\end{equation}
When $z = k \in \mathbb{R}$ and $k \rightarrow \infty$, this kernel function decays as $e^{-c |k|^{\beta}\cos(\beta\pi/2)}$. 
Furthermore,~\cite{AnChildsLin2023} proves that~\cref{eqn:kernel_intro} is the near-optimal one in the sense that there does not exist a kernel function which can simultaneously satisfy the conditions in~\cref{lem:LCHS} and decays exponentially as $e^{-c|k|}$. 

\begin{lem}[{\cite[Proposition 7]{AnChildsLin2023}}]\label{lem:LCHS_constraint}
    There does not exist a function $f(z)$ which satisfy assumptions 1-3 in~\cref{lem:LCHS} and 
    \begin{enumerate}
        \item[4.] (Exponential decay) there exist constants $C,\gamma > 0$ such that $|f(k)| \leq C e^{-\gamma|k|}$ for any $z = k \in \mathbb{R}$. 
    \end{enumerate}
\end{lem}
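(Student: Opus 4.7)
The plan is to derive a contradiction by combining the LCHS identity (applied to a trivial scalar ODE) with a Paley--Wiener-type analytic extension of the Fourier transform of $f(k)/(1-ik)$. Assume for contradiction that $f(z)$ satisfies assumptions 1--3 of~\cref{lem:LCHS} together with the exponential-decay condition $|f(k)| \le C e^{-\gamma|k|}$ on the real axis. Set
\begin{equation}
    F(x) := \int_{\mathbb{R}} \frac{f(k)}{1-ik} e^{-ikx}\, \ud k.
\end{equation}
Note that the decay assumption 2 in~\cref{lem:LCHS} forces $f(k)/(1-ik) \in L^1(\mathbb{R})$, so $F$ is a well-defined bounded continuous function that vanishes at infinity by the Riemann--Lebesgue lemma.

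First I would evaluate $F$ on the positive real axis by specializing~\cref{lem:LCHS}. Take $A(t) \equiv 1$ (a scalar), so that $L \equiv 1 \succeq 0$ and $H \equiv 0$. Then~\cref{eqn:LCHS} becomes
\begin{equation}
    e^{-t} = \int_{\mathbb{R}} \frac{f(k)}{1-ik}\, e^{-ikt}\, \ud k = F(t), \qquad t \ge 0.
\end{equation}
So $F(x) = e^{-x}$ on $[0,\infty)$. (Consistency with assumption 3 is automatic, since $F(0) = 1$.)

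Next I would upgrade the decay of $f(k)/(1-ik)$ to analyticity of $F$ in a horizontal strip. Because $|f(k)/(1-ik)| \le C' e^{-\gamma|k|}$ for large $|k|$, the integrand $f(k)/(1-ik)\, e^{-ikz}$ is in $L^1(\ud k)$ uniformly for $z$ in any compact subset of the strip $S_\gamma := \{ z \in \mathbb{C} : |\Im(z)| < \gamma\}$, and one can differentiate under the integral sign using dominated convergence. Hence $F$ extends to an analytic function on $S_\gamma$. Since the exponential $e^{-z}$ is entire and agrees with $F$ on the half-line $[0,\infty) \subset S_\gamma$ (a set with accumulation points in $S_\gamma$), the identity theorem gives
\begin{equation}
    F(x) = e^{-x}, \qquad x \in \mathbb{R}.
\end{equation}

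This is the contradiction: $F$ is bounded on $\mathbb{R}$ (in fact in $C_0(\mathbb{R})$) as the Fourier transform of an $L^1$ function, yet $e^{-x}$ is unbounded as $x \to -\infty$. Therefore no $f$ can satisfy assumptions 1--3 together with the exponential-decay condition 4. The main conceptual step is the Paley--Wiener-style promotion of real-axis exponential decay to analyticity in a strip; the main technical check is verifying that the integrability coming from assumption 2 is enough to apply Riemann--Lebesgue and dominated convergence, both of which are routine given $\alpha>0$ in assumption 2. I do not expect any step to present serious difficulty; the argument is essentially the observation that the Laplace/Fourier transform pair $\mathbf{1}_{[0,\infty)}(x) e^{-x} \leftrightarrow 1/(1-ik)$ cannot be turned into an entire-strip identity without sacrificing integrability.
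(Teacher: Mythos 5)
Your argument is correct, and it is essentially the same one the paper uses to prove its strengthened version of this lemma, Proposition~\ref{prop:constraint} in Section~\ref{sec:constraint}: define the Fourier-type integral $\Psi(z)=\int_{\mathbb{R}}\frac{f(k)}{1-ik}e^{-ikz}\,\ud k$, use the exponential decay to get analyticity and uniform boundedness in the strip $\{|\Im(z)|<\gamma\}$, pin down $\Psi(x)=e^{-x}$ on $[0,\infty)$ via the LCHS formula with $A\equiv 1$, and contradict boundedness through the identity theorem. The only point worth flagging is that you invoke the LCHS identity, which is not literally one of assumptions 1--3; this is fine because \cref{lem:LCHS} guarantees the identity under those assumptions, and it is exactly why your argument (like the paper's Proposition~\ref{prop:constraint}) actually proves the stronger statement in which analyticity on the lower half plane is replaced by mere integrability of $f(k)/(1-ik)$.
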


Based on~\cref{lem:LCHS}, we can design a quantum algorithm following the steps: 
\begin{enumerate}
    \item Truncate the integral to a finite interval $[-K,K]$ with a parameter $K > 0$, and discretize the integral by numerical quadrature. 
    This results the approximation 
    \begin{equation}\label{eqn:LCHS_num_approx}
        \mathcal{T} e^{-\int_0^t A(s) \ud s} \approx \int_{-K}^{K} \frac{f(k)}{ 1-ik} \mathcal{T} e^{-i \int_0^t (kL(s)+H(s)) \ud s} \ud k \approx \sum_{j=0}^{M-1} c_j \mathcal{T} e^{-i \int_0^t (k_j L(s)+H(s)) \ud s}. 
    \end{equation}
    Here $M$ is the number of quadrature nodes, $c_j$ and $k_j$ are the weights and nodes, respectively.  
    \item Implement (a controlled version of) the unitary operator $\mathcal{T} e^{-i \int_0^t (k_j L(s)+H(s)) \ud s}$ by the truncated Dyson series method~\cite{LowWiebe2019}. 
    \item Implement the linear combination by LCU. 
\end{enumerate}
Notice that the main cost in the LCHS algorithm is the Hamiltonian simulation step, which takes $\widetilde{\mathcal{O}}(\|A\| K T (\log(1/\epsilon)) )$ query complexity if we use the truncated Dyson series method. 
According to the choice of the kernel function in~\cref{eqn:kernel_intro}, the truncation order $K$ should be chosen as $\mathcal{O}( (\log(1/\epsilon))^{1/\beta} )$. 
Therefore, the overall complexity of the LCHS algorithm is $\widetilde{\mathcal{O}}(\|A\| T (\log(1/\epsilon))^{1+1/\beta} )$. 
Here the decay rate of the kernel function plays a key role in estimating the complexity of the algorithm.

The LCHS formula is closely related to the Fourier transform. 
To see this, let us take the matrix $A(t)$ to be the real number $1$, then we have (with replacing the notation $t$ by $x$) 
\begin{equation}\label{eqn:LCHS_necessary_condition}
    e^{-x} = \int_{\mathbb{R}} \frac{f(k)}{1-ik} e^{-i k x} \ud k, \quad x \geq 0. 
\end{equation}
Let $F(x)$ be a function such that $F(x) = e^{-x}$ for $x \geq 0$, then~\cref{eqn:LCHS_necessary_condition} tells that the Fourier transform of $\frac{f(k)}{1-ik}$ is $F(x)$. 
This further suggests that the Fourier transform of $f(k)$ is $\Phi(x) = F(x) + F'(x)$, which is equal to $0$ when $x \geq 0$. 
Notice that such a condition is only a necessary condition of the LCHS formula from the analysis in~\cite{AnChildsLin2023}.

\section{Fourier transform-based linear combination of Hamiltonian simulation}\label{sec:fbLCHS_semi}

In this section, we present the Fourier transform-based linear combination of Hamiltonian simulation (FT-LCHS) formalism dominated by a set of generating functions that significantly reduces the difficulty of finding the kernel functions while offering a fundamentally more natural proof based on Fourier transform, compared to the original LCHS. 

We first give the set of generating functions and the construction of the kernel function.

\begin{assumption}\label{assump:F}
Let $F(x)$ be a function defined on the real axis. 
We assume $F(x)$ satisfies:
    \begin{enumerate} 
        \item (Behavior on positive axis) $F(x) = e^{-x}$ for all \(x \geq 0\). 
        \item (Regularity) $F(x) \in C^2(\mathbb{R}) \cap W^{2,1}(\mathbb{R})$. 
\end{enumerate}
\end{assumption}

\begin{defn}[The generating function $\Phi(x)$]\label{defn:Phi}
    The generating function $\Phi(x)$ is defined on the real axis as 
    \begin{equation}
        \Phi(x) = F(x) + F'(x). 
    \end{equation}
\end{defn}

\begin{defn}[The kernel function $f(z)$]\label{defn:f}
For any $z \in \{z: \operatorname{Im}(z) \leq 0 \}$, the kernel function $f(z)$ is defined as the inverse Fourier transform of $\Phi(x)$:
\begin{align}
    f(z) = \frac{1}{2\pi} \int_{-\infty}^{\infty} \Phi(x) e^{izx} \, \ud x = \frac{1}{2\pi} \int_{-\infty}^{0} \Phi(x) e^{izx} \, \ud x.
\end{align}
\end{defn}

Notice that for any $z \in \{z: \operatorname{Im}(z) \leq 0 \}$, $f(z)$ through~\cref{defn:f} is well-defined, because $|e^{izx}| \leq 1$ for all $x \leq 0$ and 
\begin{equation}
    \frac{1}{2\pi} \left|\int_{-\infty}^{0} \Phi(x) e^{izx} \, \ud x \right| \leq \frac{1}{2\pi} \int_{-\infty}^{0} |\Phi(x)|  \, \ud x < \infty. 
\end{equation}
Here the last inequality comes from the regularity assumption that $F(x) \in W^{2,1}(\mathbb{R})$ so $\Phi(x) = F(x)+F'(x) \in W^{1,1}(\mathbb{R}) \subset L^1 (\mathbb{R})$. 
Furthermore, although the kernel function $f(z)$ in~\cref{defn:f} is defined on the lower half plane, the LCHS formula~\cref{eqn:LCHS} only involves the values of the kernel functions on the real axis. 
The reason why we still define the kernel function on the complex plane is to facilitate the proof of the LCHS formula: we simply need to verify that such a kernel function satisfies all the conditions in~\cref{lem:LCHS}. 

\begin{thm}\label{thm:f_satisfied}
Under Assumption \ref{assump:F}, $f(z)$ satisfies the conditions in~\cref{lem:LCHS}, and thus for any $A(t) = L(t) + iH(t)$ with $L(s) \succeq 0$ for $0 \leq s \leq t$, we have 
\begin{equation}
    \mathcal{T} e^{-\int_0^t A(s) \ud s} = \int_{\mathbb{R}} \frac{f(k)}{ 1-ik} \mathcal{T} e^{-i \int_0^t (kL(s)+H(s)) \ud s} \ud k. 
\end{equation}
\end{thm}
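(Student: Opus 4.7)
The plan is to verify that the kernel function $f(z)$ from~\cref{defn:f} satisfies all three hypotheses of~\cref{lem:LCHS}, after which the LCHS formula is immediate. The central preliminary observation is that $\Phi$ is supported on the non-positive axis: because $F(x) = e^{-x}$ on $[0,\infty)$, we have $F'(x) = -e^{-x}$ there, so $\Phi(x) = F(x)+F'(x) = 0$ for all $x \geq 0$. Evaluating at $x = 0^+$ and using $F \in C^1$ also yields the boundary identity $\Phi(0) = F(0) + F'(0) = 1 - 1 = 0$, which will be used below.

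For analyticity and continuity, I would note that for $\operatorname{Im}(z) \leq 0$ and $x \leq 0$ one has $|e^{izx}| = e^{-\operatorname{Im}(z)\,x} \leq 1$; together with $\Phi \in L^1(\mathbb{R})$ (which follows from $F \in W^{2,1}(\mathbb{R})$), this makes the defining integral absolutely convergent and uniformly bounded on the closed lower half plane. Analyticity on the open lower half plane then follows by differentiating under the integral sign, and continuity on the closure follows by dominated convergence. For the polynomial decay, I would integrate by parts once: the boundary contribution at $x = 0$ vanishes because $\Phi(0) = 0$, and the contribution at $-\infty$ vanishes because $\Phi, \Phi' \in L^1$ forces $\Phi(x) \to 0$ as $x \to -\infty$. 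This gives
\begin{equation*}
    f(z) = -\frac{1}{2\pi i z} \int_{-\infty}^{0} \Phi'(x)\, e^{izx}\, \ud x,
\end{equation*}
and reusing $|e^{izx}| \leq 1$ yields $|z|\,|f(z)| \leq \|\Phi'\|_{L^1}/(2\pi)$, which is condition 2 of~\cref{lem:LCHS} with $\alpha = 1$.

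The main obstacle, and the step most sensitive to the regularity in~\cref{assump:F}, is the normalization $\int_{\mathbb{R}} f(k)/(1-ik)\, \ud k = 1$. My plan is to identify this ratio explicitly with a rescaled Fourier transform of $F$. Using the convention $\widehat{g}(k) = \int g(x) e^{-ikx}\, \ud x$, the $W^{1,1}$ differentiation rule gives $\widehat{\Phi}(k) = (1+ik)\,\widehat{F}(k)$, and~\cref{defn:f} rewrites as $f(k) = \widehat{\Phi}(-k)/(2\pi) = (1-ik)\,\widehat{F}(-k)/(2\pi)$, whence $f(k)/(1-ik) = \widehat{F}(-k)/(2\pi)$. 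The normalization then reduces to Fourier inversion of $F$ at the origin: $\frac{1}{2\pi}\int \widehat{F}(k)\,\ud k = F(0) = 1$. To apply inversion rigorously I would verify $\widehat{F} \in L^1(\mathbb{R})$ via the estimate $(1+k^2)|\widehat{F}(k)| \leq \|F\|_{L^1} + \|F''\|_{L^1} < \infty$ (valid because $F \in W^{2,1}$), and invoke the continuity of $F$ at $0$ provided by $C^2$ to obtain pointwise inversion there. With all three conditions of~\cref{lem:LCHS} verified, the theorem follows directly.
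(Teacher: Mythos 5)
Your proposal is correct and follows essentially the same route as the paper: exploit the support of $\Phi$ on $(-\infty,0]$ to get well-definedness and analyticity on the closed lower half plane, integrate by parts once (using $\Phi(0)=0$ and $\Phi(x)\to 0$ as $x\to-\infty$) to get the decay condition with $\alpha=1$, and identify $f(k)/(1-ik)$ with the (inverse) Fourier transform pair of $F$ to reduce normalization to $F(0)=1$. The only cosmetic differences are that the paper establishes analyticity via Fubini--Tonelli, Cauchy, and Morera rather than differentiation under the integral sign, and your verification that $\widehat{F}\in L^1$ (using $F\in W^{2,1}$) to justify pointwise Fourier inversion at the origin is actually slightly more explicit than the paper's treatment of that step.
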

\begin{proof}
By~\cref{assump:F} and the definition of the generating function $\Phi(x)$ in~\cref{defn:Phi}, we have that 
\begin{equation}
    \Phi(x) = 0, \, \forall \, x \geq 0, 
\end{equation}
and 
\begin{equation}
    \Phi(x) \in C^1(\mathbb{R}) \cap W^{1,1}(\mathbb{R}). 
\end{equation}
Now we verify the three conditions in~\cref{lem:LCHS} for our kernel function $f(z)$ generated by $\Phi(x)$ through~\cref{defn:f}. 
\begin{enumerate}
    \item[\textnormal{(i)}] \textbf{Analyticity:} Since $\Phi(x)$ is identically zero for all \(x > 0\), and for any $z=k+iw$ with $\operatorname{Im}(z)=w \leq 0$, the exponential term $|e^{izx}|=|e^{-wx}| \leq 1$ if $x \leq 0$, we have that $f(z)$ converges absolutely and uniformly: 
\begin{align}
      |f(z)| \le \frac{1}{2\pi} \int_{-\infty}^{0} |\Phi(x)| |e^{izx}| \, \ud x \le \frac{1}{2\pi} \int_{-\infty}^{0} |\Phi(x)| \, \ud x = \frac{\|\Phi\|_{L^1}}{2\pi} < \infty.
\end{align}
This shows \( f(z) \) is well-defined on \(\{z: \operatorname{Im}(z) \leq 0 \}\). 

Let $z_0$ be an arbitrary point in \(\{z: \operatorname{Im}(z) \leq 0 \}\), and $\{z_n\}_{n=1}^\infty$ be any sequence in \(\{z: \operatorname{Im}(z) \leq 0 \}\) such that $\lim_{n \to \infty} z_n = z_0$. 
We then apply the dominated convergence theorem to interchange the limit and the integral and obtain
\begin{align}
    \lim_{n \to \infty} f(z_n) &= \lim_{n \to \infty} \frac{1}{2\pi} \int_{-\infty}^{0} \Phi(x) e^{iz_n x} \, \ud x \\
    &= \frac{1}{2\pi} \int_{-\infty}^{0} \left( \lim_{n \to \infty} \Phi(x) e^{iz_n x} \right) \, \ud x \\
    &= \frac{1}{2\pi} \int_{-\infty}^{0} \Phi(x) e^{iz_0 x} \, \ud x \\
    &= f(z_0).
\end{align}
This shows that $f(z)$ is continuous on \(\{z: \operatorname{Im}(z) \leq 0 \}\). 

Let \( \Gamma \subset \{z: \operatorname{Im}(z) < 0 \} \) be any closed piecewise $C^1$ contour. 
The function $\Phi(x) e^{izx}$ is still absolutely integrable over $(z,x) \in \Gamma \times (-\infty,0)$. 
By the Fubini-Tonelli theorem, we obtain
  \begin{align}
        \oint_{\Gamma} f(z) \ud z &= \oint_{\Gamma} \left( \frac{1}{2\pi} \int_{-\infty}^{0} \Phi(x) e^{izx} \, \ud x \right) \ud z   = \frac{1}{2\pi} \int_{-\infty}^{0} \Phi(x) \left( \oint_{\Gamma} e^{izx} \, \ud z \right) \ud x.
    \end{align}
Cauchy's integral theorem gives that \(\oint_{\Gamma} e^{-izx} \, dz = 0 \) for any $x$. 
Hence
\begin{align}
   \oint_{\Gamma} f(z) \ud z=0. 
\end{align}
By Morera's theorem, $f(z)$ is analytic on the lower half plane \(\{z: \operatorname{Im}(z) < 0 \}\). 

    \item[\textnormal{(ii)}] \textbf{Decay:} We perform integration by parts on the definition of $f(z)$ and obtain 
    \begin{align}
        f(z) & = \frac{1}{2\pi} \int_{-\infty}^{0} \Phi(x) e^{izx} \, \ud x \\
        & = \frac{1}{2\pi i z} \left( \Phi(0) - \lim_{x\rightarrow -\infty} ( \Phi(x) e^{izx} ) \right) - \frac{1}{2\pi i z} \int_{-\infty}^{0} \Phi'(x) e^{izx} \, \ud x \\
        & = - \frac{1}{2\pi i z} \lim_{x\rightarrow -\infty} ( \Phi(x) e^{izx} ) - \frac{1}{2\pi i z} \int_{-\infty}^{0} \Phi'(x) e^{izx} \, \ud x. 
    \end{align}
    Notice that $\lim_{x\rightarrow -\infty} \Phi(x)$ exists due to the fundamental theorem of calculus $\Phi(x) = \Phi(0) - \int_{x}^0 \Phi'(y) \ud y$ and the existence of $ \lim_{x\rightarrow -\infty} \int_{x}^0 \Phi'(y) \ud y$ for $\Phi'(x) \in L^1(\mathbb{R})$. 
    Then, $\lim_{x\rightarrow -\infty} \Phi(x)$ must be $0$ since $\Phi(x) \in L^1(\mathbb{R})$. 
    Combining with $|e^{izx}| \leq 1$ for $\operatorname{Im}(z) \leq 0$ and $x \leq 0$, we have $\lim_{x\rightarrow -\infty} ( \Phi(x) e^{izx} ) = 0$ and 
    \begin{equation}
        f(z) = - \frac{1}{2\pi i z} \int_{-\infty}^{0} \Phi'(x) e^{izx} \, \ud x. 
    \end{equation}
    Then 
    \begin{align}
        |f(z)| \le \frac{1}{2\pi |z|} \int_{-\infty}^{0} |\Phi'(x)| \ud x = \frac{\|\Phi'\|_{L^1}}{2\pi|z|}, 
    \end{align}
    which suggests that the decay condition in~\cref{lem:LCHS} holds with $\alpha = 1$ and $\widetilde{C} = \|\Phi'\|_{L^1}/(2\pi)$. 
    
    \item[\textnormal{(iii)}] \textbf{Normalization:} Notice that for any $k \in \mathbb{R}$, we have 
    \begin{align}
        \frac{f(k)}{1-ik} &= \frac{1}{2\pi(1-ik)} \int_{-\infty}^{+\infty} \Phi(x) e^{ikx} \, \ud x \\
        &= \frac{1}{2\pi(1-ik)} \left( \int_{-\infty}^{+\infty} F(x) e^{ikx} \, \ud x + \int_{-\infty}^{+\infty} F'(x) e^{ikx} \, \ud x \right) \\
        & = \frac{1}{2\pi (1-ik)} \left( \int_{-\infty}^{+\infty} F(x) e^{ikx} \, \ud x  - ik \int_{-\infty}^{+\infty} F(x) e^{ikx} \, \ud x \right) \\
        & = \frac{1}{2\pi} \int_{-\infty}^{+\infty} F(x) e^{ikx} \, \ud x. 
    \end{align}
    So the Fourier transform of $\frac{f(k)}{1-ik}$ is $F(x)$, and thus $\int_{\mathbb{R}} \frac{f(k)}{ 1-ik} \ud k = F(0) = 1$ from~\cref{assump:F}.
\end{enumerate}
\end{proof}

\section{Applications of FT-LCHS}\label{sec:app}

In this section, we discuss how our FT-LCHS formalism can improve and extend the scope of the LCHS algorithm. 
We first show a concrete approach to construct a kernel function different from the choice in previous works, reducing the quantum query complexity of the LCHS algorithm by up to a multiplicative factor of $1.81$, and the quantum circuit depth of the LCHS algorithm by up to a multiplicative factor of $8.27$. 
We then show how the LCHS formalism can be extended to the case of unstable dynamics, where the real part of the coefficient matrix $A(t)$, $L(t)$, is not necessarily positive semi-definite.

\subsection{Constantly improved LCHS algorithm}\label{app:applications_proof}

The FT-LCHS framework allows flexibility in kernel function design. 
We do not need to worry about the values (or even the analyticity and decay) of the kernel function over the entire lower half complex plane. 
Instead, we can simply follow the steps: 
\begin{enumerate}
    \item Construct a function $F(x)$ by extending $e^{-x}$ from positive real axis to the entire real axis with sufficient smoothness and integrability such that~\cref{assump:F} is satisfied. 
    \item Compute the kernel function $f(k)$ by taking the inverse Fourier transform of $F(x)$ and then multiplying $(1-ik)$.  
\end{enumerate}
Although this is an effective and constructive approach, the resulting kernel functions are not expected to decay asymptotically faster than those in the original framework in~\cite{AnChildsLin2023}. 
This is because the kernel functions constructed from FT-LCHS satisfy all the conditions in~\cref{lem:LCHS} as shown in~\cref{thm:f_satisfied}, so the no-go theorem in~\cref{lem:LCHS_constraint} still applies. 
Nevertheless, it is possible to construct kernel functions with constantly faster decay at least in a sub-regime, thus improving the efficiency of the quantum LCHS algorithm up to a constant factor.

\subsubsection{Criteria of a good kernel function}

Let us first discuss how to determine a good kernel function, i.e., the properties of a kernel function that will affect the efficiency of the LCHS algorithms. 
As discussed in~\cref{sec:prelim} after~\cref{lem:LCHS}, we first truncate the integral in the LCHS formula to a finite interval $[-K,K]$ and discretize it by numerical quadrature, leading to the approximation in~\cref{eqn:LCHS_num_approx}. 
Then, in a quantum implementation of the LCHS algorithm, we implement the Hamiltonian simulation problem by the truncated Dyson series method, and linearly combine them by LCU. 

The overall complexity of the algorithm is dominated by the number of queries to the coefficient matrix $A(t)$, which is affected by two components. 
Here we only focus on the asymptotic scaling in terms of the parameters related to the kernel function, so we assume the evolution time and the accuracy are fixed constants. 
First, in each run of LCU, we need to construct the select oracle of (i.e., a coherently controlled version of) the Hamiltonian simulation operators $\mathcal{T} e^{-i \int_0^t ( k_j L(s) + H(s)) \ud s } $. 
Using truncated Dyson series method, the complexity scales linearly in the spectral norm of the Hamiltonians, which is upper bounded by $\mathcal{O}(K)$. 
Here $K$ is chosen such that the integral truncation error is bounded by a pre-fixed error parameter. 
Second, the LCU needs to be repeated for multiple times to achieve high success probability. 
With amplitude amplification, the number of rounds is $\mathcal{O}(\|\vec{c}\|_1)$, where $\|\vec{c}\|_1 = \sum_{j=0}^{M-1} |c_j|$. 
Therefore, the overall query complexity is 
\begin{equation}
    \mathcal{O}(\|\vec{c}\|_1 K). 
\end{equation}
According to the definition of the coefficients $c_j$ in~\cref{eqn:LCHS_num_approx}, we have $\|\vec{c}\|_1 \approx \int_{-K}^K \left|\frac{f(k)}{1-ik}\right| \ud k $, so the overall complexity becomes 
\begin{equation}
    \mathcal{O}\left(K \int_{-K}^K \left|\frac{f(k)}{1-ik}\right| \ud k \right), 
\end{equation}
which can be used as a measure of the quality of a kernel function. 

In some cases, the truncation parameter $K$ itself can also serve as an indicator of the quality of a kernel function. 
For example, if we wish to reduce the circuit depth of the quantum algorithm, we can independently run LCU circuits in parallel and measure the ancilla qubits until success without amplitude amplification. 
Here the circuit depth is simply $\mathcal{O}(K)$. 
Another scenario is when we would like to estimate the unnormalized observable $u(t)^{\dagger} O u(t)$. 
Here $u(t)$ is the ODE solution, and $O$ is a Hermitian matrix. 
Using the approximate LCHS formula in~\cref{eqn:LCHS_num_approx}, we can write the observable as 
\begin{equation}
    u(t)^{\dagger} O u(t) \approx \sum_{j=0}^{M-1} \sum_{j'=0}^{M-1} \overline{c}_{j} c_{j'} \langle u_0 | U_j^{\dagger} O U_{j'} | u_0 \rangle, 
\end{equation}
where $U_j = \mathcal{T} e^{-i \int_0^t (k_j L(s)+H(s)) \ud s}$. 
So we can estimate each observable $\langle u_0 | U_j^{\dagger} O U_{j'} | u_0 \rangle$ independently by Hadamard test, and then linearly combine them on a classical computer by importance sampling. 
The quantum circuit depth is again $\mathcal{O}(K)$. 

In summary, we can use the following two quantities to measure the quality of a kernel function: 
\begin{enumerate}
    \item The truncation parameter $K$, which is the quantum circuit depth of the LCHS algorithm without amplitude amplification or the hybrid implementation of LCHS. 
    \item The quantity $K \int_{-K}^K \left|\frac{f(k)}{1-ik}\right| \ud k$, which is the overall complexity of the quantum LCHS algorithm with amplitude amplification. 
\end{enumerate}

\subsubsection{A specific construction}\label{sec:kernel_construction}
Although~\cref{assump:F} only requires $F(x)$ to be second-order differentiable, in practice we usually choose $F(x)$ in the Schwartz space to ensure that the corresponding kernel function $f(k)$ decays at least super-polynomially. 
To this end, we smoothly connect a zero function and $e^{-x}$. 
Specifically, for a non-negative number $\delta \geq 0$, we define 
\begin{align}\label{eqn:def_kernel_from_glue}
F(x) =
\begin{cases}
e^{-x}, & x > 0, \\
\psi \left(\frac{x + \delta}{\delta}\right) e^{-x}, & -\delta \leq x \leq 0, \\
0, & x < -\delta,
\end{cases}
\end{align}
where the transition from $0$ to $e^{-x}$ occurs in the interval $[-\delta, 0]$. 
The function $\psi(x)$ is called the glue function, defined for $y \in [0,1]$ as
\begin{equation}
\psi(y) = c_g \int_{0}^{y} e^{-\frac{1}{p^{m}(1-p)^{m}}} \ud p, 
\end{equation}
with \(c_g = \left(\int_{0}^{1} e^{-\frac{1}{p^{m}(1-p)^{m}}} \ud p \right)^{-1}\) as the normalization constant and $m$ being a positive integer. 
Notice that $\psi(y)$ is a $C^{\infty}$ function with $\psi(0) = 0$, $\psi(1) = 1$, and vanishing derivatives of any order at $y = 0$ and $1$. 
As a result, $F(x)$ in~\cref{eqn:def_kernel_from_glue} is $C^{\infty}$ and its derivative of any order is integrable. 
We use $g(k)$ to denote the inverse Fourier transform of $F(x)$ in~\cref{eqn:def_kernel_from_glue}, which is supposed to be $\frac{f(k)}{1-ik}$ for some kernel function $f(k)$. 

\begin{figure}
    \centering
    \includegraphics[width=0.75\textwidth]{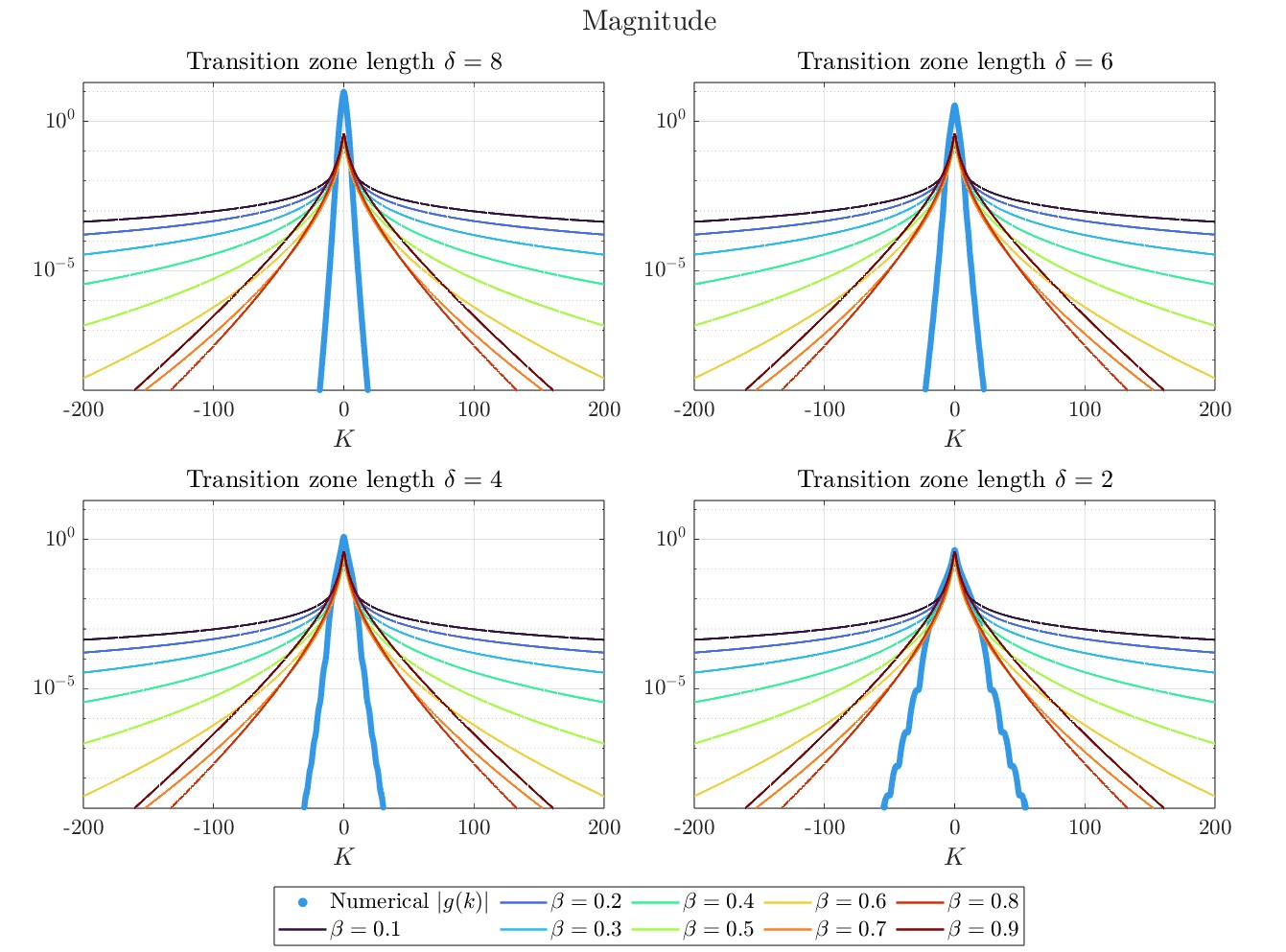}
    \caption{ The magnitude of the \(g(k)\) (with \( m=2,\text{transition zone length}=2,4,6,8\)) compared to \(|f_\beta(k)/(1-ik)|\) from the original LCHS for various $\beta \in [0.1, 0.9]$. }
    \label{fig:magnitude}
\end{figure}

\cref{fig:magnitude} shows the magnitude of $|g(k)|$ with $m = 2$ and various values of $\delta$ compared to $|f_\beta(k)/(1-ik)|$. 
We observe that our kernel function decays much more quickly compared to the one used in the previous work~\cite{AnChildsLin2023}. 
Furthermore, our kernel function decays faster as the transition length $\delta$ increases, suggesting a smaller truncation parameter $K$. 

We then compare our kernel function against the original one ($\beta=0.7, 0.8$) in a practical setting. 
We numerically evaluate the minimum $K$ to achieve a target truncation error $\epsilon$ for randomly generated $4 \times 4$ matrices $A$. 
The results are shown in~\cref{fig:min_k_error} and suggest that the new kernel functions are more efficient compared to the original one. 
Specifically, a larger transition zone length $\delta$ leads to a smaller required $K$. 
Besides, the kernel function with order $m=2$ appears to offer the best numerical performance in the requirement for \(K\). 
At a truncation error of $\epsilon \le 10^{-8}$, our kernel function can reduce the required $K$ from the original LCHS ($\beta=0.7, 0.8$) by up to a factor of nearly $9.71$ and $8.27$. 

Nevertheless, \cref{fig:magnitude} also shows that the magnitude of our kernel functions near $0$ can be greater than the original one, and even greater as $\delta$ increases, leading to a larger value of the integral $\int_{-K}^{K}|g(k)|\ud k$. 
To study whether our kernel function could offer a complexity saving in the quantum LCHS algorithm, we compare the value $K\cdot\int_{-K}^{K}|f(k)/(1-ik)|\ud k$ of our kernel function with various choices of $\delta$ and $m$ against the original kernel function. 
The results are shown in~\cref{fig:complexity}. 
In this case, the kernel function with carefully tuned $m$ and $\delta$ appears to offer a better balance between the truncation parameter and the integral value and thus yields smaller $K\cdot\int_{-K}^{K}|f(k)/(1-ik)|\ud k$. 
At a truncation error of $\epsilon \le 10^{-8}$, our kernel function with $m = 1$ and $\delta = 2$ can reduce the value of $K\cdot\int_{-K}^{K}|f(k)/(1-ik)|\ud k$ by a factor of $1.93$ and $1.81$, nearly halving the overall complexity of the quantum LCHS algorithm ($\beta=0.7, 0.8$).

\begin{figure}
    \centering
    \includegraphics[width=1.0\textwidth]{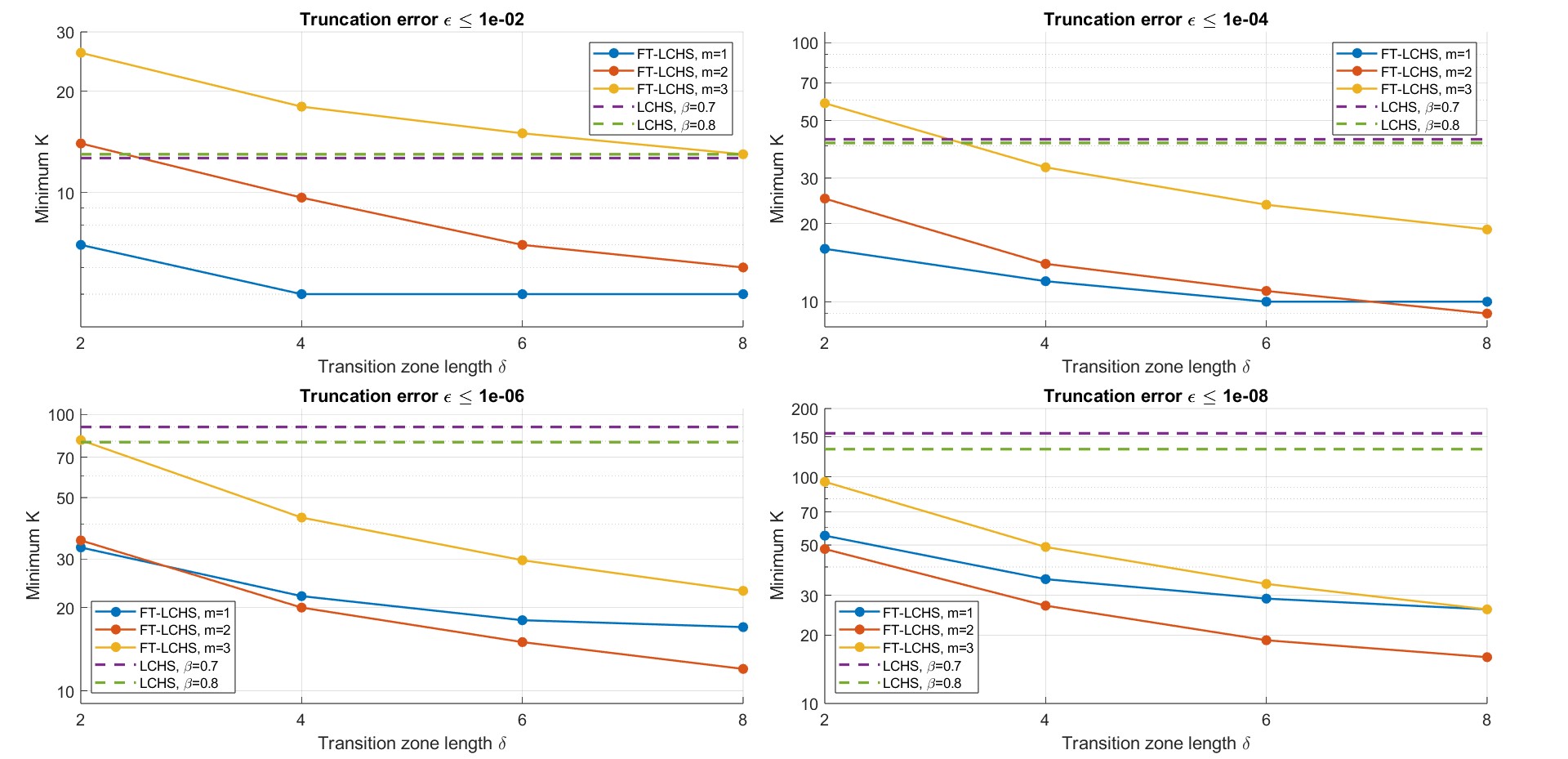}
    \caption{ Minimum $K$ required to achieve a series of truncation errors, $\epsilon$, between FT-LCHS (with order $m=1,2,3$) and LCHS (with $\beta=0.7,0.8$) for varying transition zone lengths $\delta$.}
    \label{fig:min_k_error}
\end{figure}

\begin{figure}
    \centering
    \includegraphics[width=1.0\textwidth]{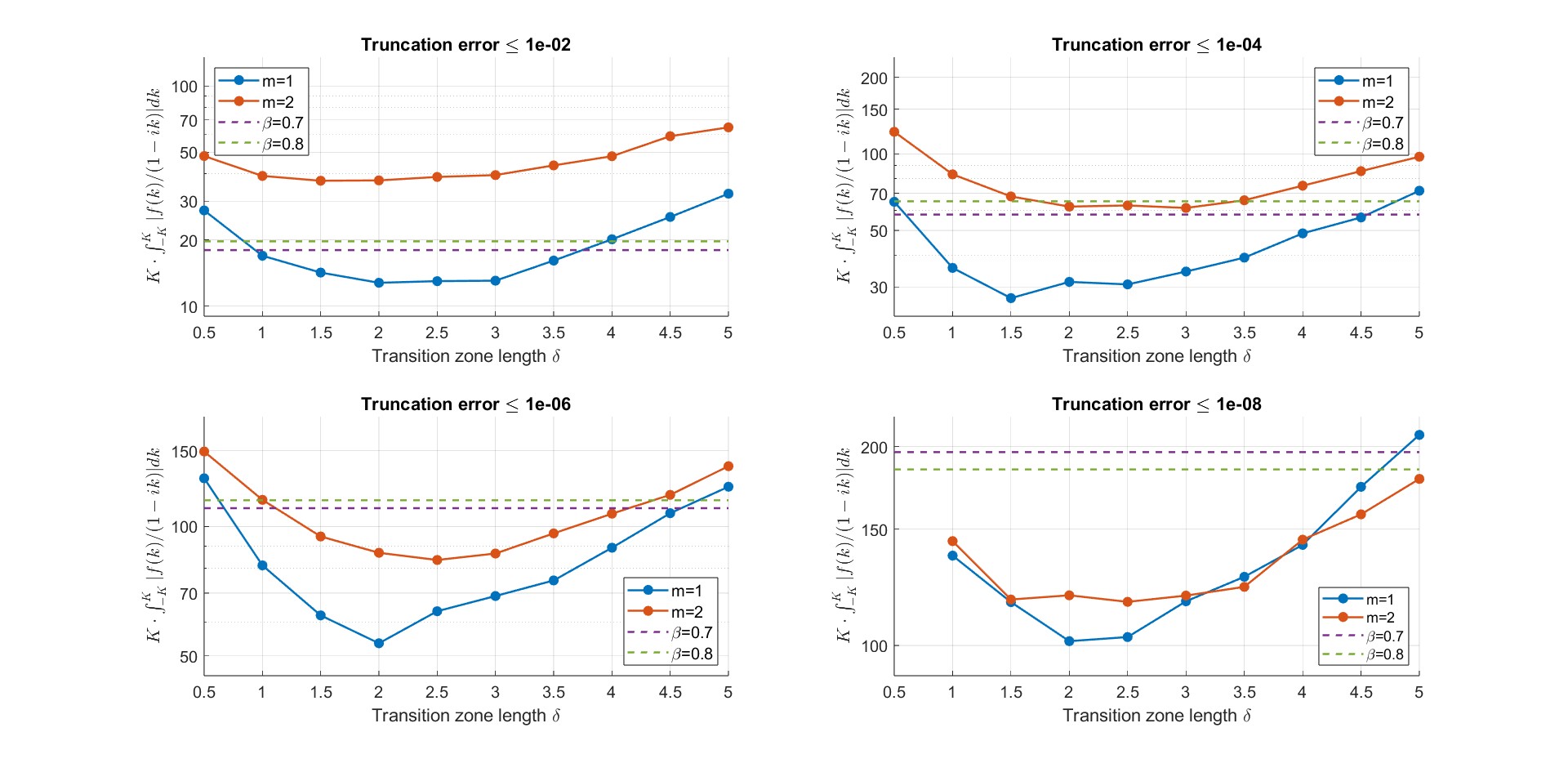}
    \caption{\(K\cdot\int_{-K}^{K}|f(k)/(1-ik)|\ud k\) between FT-LCHS (with order $m=1,2$) and LCHS (with $\beta=0.7,0.8$) for varying transition zone lengths $\delta$.}
    \label{fig:complexity}
\end{figure}

\subsection{LCHS for unstable dynamics}\label{sec:fbLCHS_away_origin}

The original LCHS, along with many other methods for linear ODEs, assumes the real part of \( A(t) \) to be positive semi-definite. 
In other words, the log-norm of $A(t)$ is non-negative. 
However, by employing the FT-LCHS with support of the generating function including part of the negative real axis, we can naturally extend the applicability of LCHS to cases where the log-norm of $A(t)$ is negative. 
However, this extension is rigorously valid only within a certain time limit, which is inversely proportional to the magnitude of the negative log-norm. 

We first state the assumption for the function $F(x)$ in this subsection. 

\begin{assumption}\label{assump:strong}
Let $F(x)$ be a function defined on the real axis. 
We assume $F(x)$ satisfies:
    \begin{enumerate} 
        \item (Behavior on real axis) $F(x) = e^{-x}$ for all \(x \geq -a\) where $a$ is a non-negative number. 
        \item (Regularity) $F(x) \in C^2(\mathbb{R}) \cap W^{2,1}(\mathbb{R})$. 
\end{enumerate}
\end{assumption}

Compared to~\cref{assump:F} in the stable case, the above~\cref{assump:strong} is forced to be $e^{-x}$ in an extended regime $[-a, \infty)$. 
We can construct the corresponding kernel function following the same approach in~\cref{sec:fbLCHS_semi}: we first compute the generating function $\Phi(x)$ by~\cref{defn:Phi} and take the inverse Fourier transform to obtain $f(k)$ as in~\cref{defn:f}. 
Then we have the following extended LCHS formalism. 

\begin{thm} \label{thm:fbLCHS-extended}
Let $A(t) = L(t) + i H(t) \in \mathbb{C}^{N\times N}$ be the coefficient matrix such that $L(t) \succeq -\lambda_0$ for a non-negative real number $\lambda_0$. 
Let $F(x)$ be a function satisfying~\cref{assump:strong} with parameter $a$, and $f(k)$ be a kernel function obtained according to~\cref{defn:Phi} and~\cref{defn:f}. 
Then, for all $t \leq a/\lambda_0$, we have  
\begin{equation}
    \mathcal{T} e^{-\int_0^t A(s) \ud s} = \int_{\mathbb{R}} \frac{f(k)}{ 1-ik} \mathcal{T} e^{-i \int_0^t (kL(s)+H(s)) \ud s} \ud k. 
\end{equation}
\end{thm}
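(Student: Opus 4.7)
The plan is to reduce the unstable case to the already-established stable case, \cref{thm:f_satisfied}, via a constant shift of $A$. The key observation is that the log-norm constraint $L(s) \succeq -\lambda_0$ becomes $\succeq 0$ after adding $\lambda_0 I$, at the cost of a scalar factor $e^{-\lambda_0 t}$ in the time-ordered exponential. This cost must be absorbed into a compensating translation and rescaling of the generating function, and \cref{assump:strong} is exactly what makes the shifted function still satisfy the stable-case hypothesis \cref{assump:F} whenever $t \leq a/\lambda_0$.

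Concretely, I would first set $\tilde{A}(s) = A(s) + \lambda_0 I$ with real part $\tilde{L}(s) = L(s) + \lambda_0 I \succeq 0$, and define the shifted generating function
\begin{equation*}
    \hat{F}(x) = e^{-\lambda_0 t}\, F(x - \lambda_0 t).
\end{equation*}
For any $x \geq 0$ one has $x - \lambda_0 t \geq -\lambda_0 t \geq -a$, so \cref{assump:strong} gives $F(x - \lambda_0 t) = e^{-(x - \lambda_0 t)}$, hence $\hat{F}(x) = e^{-x}$; the $C^2(\mathbb{R}) \cap W^{2,1}(\mathbb{R})$ regularity is inherited from $F$ under translation and scalar multiplication. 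Thus $\hat{F}$ satisfies \cref{assump:F}, and the associated generating function and kernel are $\hat{\Phi}(x) = e^{-\lambda_0 t}\, \Phi(x - \lambda_0 t)$ and, by a change of variables in \cref{defn:f},
\begin{equation*}
    \hat{f}(k) = \frac{1}{2\pi}\int_{\mathbb{R}} e^{-\lambda_0 t}\, \Phi(x - \lambda_0 t)\, e^{ikx}\, \ud x = e^{-(1-ik)\lambda_0 t}\, f(k).
\end{equation*}

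Next, applying \cref{thm:f_satisfied} to $\tilde{A}$ with the kernel $\hat{f}$ yields
\begin{equation*}
    \mathcal{T} e^{-\int_0^t \tilde{A}(s)\, \ud s} = \int_{\mathbb{R}} \frac{\hat{f}(k)}{1-ik}\, \mathcal{T} e^{-i\int_0^t (k\tilde{L}(s) + H(s))\, \ud s}\, \ud k.
\end{equation*}
Because $\lambda_0 I$ commutes with every operator, both time-ordered exponentials factor cleanly:
\begin{equation*}
    \mathcal{T} e^{-\int_0^t \tilde{A}(s)\, \ud s} = e^{-\lambda_0 t}\, \mathcal{T} e^{-\int_0^t A(s)\, \ud s}, \qquad \mathcal{T} e^{-i\int_0^t(k\tilde{L}(s) + H(s))\, \ud s} = e^{-ik\lambda_0 t}\, \mathcal{T} e^{-i\int_0^t(kL(s) + H(s))\, \ud s}.
\end{equation*}
Substituting these together with $\hat{f}(k) = e^{-(1-ik)\lambda_0 t} f(k)$, the exponents $-(1-ik)\lambda_0 t$ from $\hat{f}$ and $-ik\lambda_0 t$ from the inner unitary telescope to $-\lambda_0 t$, which cancels the overall factor $e^{-\lambda_0 t}$ on the left and leaves exactly the claimed identity with the original kernel $f(k)$.

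\textbf{Main obstacle.} The strategy is essentially mechanical once the correct shift is identified, and the hard part is pinpointing both the right translation and the right prefactor simultaneously: without the $e^{-\lambda_0 t}$ normalization, $\hat{F}(x)$ would fail to equal $e^{-x}$ on $[0,\infty)$ and $\hat{F}$ would not satisfy \cref{assump:F}; without the translation by $\lambda_0 t$, the $k$-dependent phase in $\hat{f}/f$ would not match the phase introduced by factoring out $e^{-ik\lambda_0 t}$ from the auxiliary unitary. The time restriction $t \leq a/\lambda_0$ is forced precisely by requiring $x - \lambda_0 t \geq -a$ for all $x \geq 0$, so that the stable-case hypothesis applies after the shift; no further analytic machinery (such as direct contour integration or a refined decay analysis of $f$ in the lower half plane) is needed beyond what is already packaged inside \cref{thm:f_satisfied}.
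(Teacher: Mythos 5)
Your proposal is correct and follows essentially the same reduction as the paper: shift $L$ by a constant to make it positive semi-definite, translate and rescale $F$ so that the shifted generating function satisfies \cref{assump:F} (which is exactly where $t \leq a/\lambda_0$ is used), apply \cref{thm:f_satisfied}, and cancel the compensating exponential factors. The only cosmetic difference is that the paper shifts by $a/t$ (translating $F$ by $a$) whereas you shift by $\lambda_0$ (translating by $\lambda_0 t$); both are valid under the stated time constraint.
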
 
\begin{proof}
    We will prove this claim by reducing it to the stable case via proper shifting. 
    For $0 \leq s \leq t$, we define
    \begin{equation}
        \widetilde{L}(s) = L(s) + \frac{a}{t} I \succeq 0. 
    \end{equation}
    Then 
    \begin{equation}\label{eqn:proof_extended_eq1}
        \int_{\mathbb{R}} \frac{f(k)}{ 1-ik} \mathcal{T} e^{-i \int_0^t (kL(s)+H(s)) \ud s} \ud k = \int_{\mathbb{R}} \frac{f(k) e^{iak} }{ 1-ik} \mathcal{T} e^{-i \int_0^t (k\widetilde{L}(s)+H(s)) \ud s} \ud k. 
    \end{equation}
    Notice that the function $\frac{f(k)}{1-ik}$ is the inverse Fourier transform of $F(x)$, so $e^{-a}\frac{f(k)e^{iak}}{1-ik}$ is the inverse Fourier transform of $e^{-a} F(x-a)$. 
    Since the function $e^{-a} F(x-a)$ satisfies~\cref{assump:F} and $\widetilde{L}(s) \succeq 0$, we can apply~\cref{thm:f_satisfied} to the matrix $\widetilde{L}(s) + iH(s)$ and obtain 
    \begin{equation}\label{eqn:proof_extended_eq2}
        \mathcal{T} e^{-\int_0^t (\widetilde{L}(s) + i H(s)) \ud s} = \int_{\mathbb{R}} e^{-a}\frac{f(k) e^{iak}}{ 1-ik} \mathcal{T} e^{-i \int_0^t (k\widetilde{L}(s)+H(s)) \ud s} \ud k. 
    \end{equation}
    Plugging~\cref{eqn:proof_extended_eq2} back to~\cref{eqn:proof_extended_eq1}, we have 
    \begin{equation}
        \int_{\mathbb{R}} \frac{f(k)}{ 1-ik} \mathcal{T} e^{-i \int_0^t (kL(s)+H(s)) \ud s} \ud k = e^a \mathcal{T} e^{-\int_0^t (\widetilde{L}(s) + i H(s)) \ud s} = \mathcal{T} e^{-\int_0^t (L(s) + i H(s)) \ud s}. 
    \end{equation}
\end{proof}

\section{Constraint of kernel function}\label{sec:constraint}

As discussed earlier, although~\cref{thm:f_satisfied} provides an effective way to construct kernel functions, the decay properties of the kernel functions remain fundamentally the same as in~\cref{lem:LCHS_constraint}, because the kernel functions in FT-LCHS lie in a subspace of the kernel functions specified in~\cref{lem:LCHS}. 
In this section, we will show a slightly stronger result that the kernel function cannot decay exponentially as long as the LCHS formula of the form~\cref{eqn:LCHS} holds and the weight function $\frac{f(k)}{1-ik}$ is $L^1(\mathbb{R})$, which appears to be weaker than the regularity assumption made in~\cref{assump:F} and is the necessary condition for an efficient quantum algorithm implemented by LCU.

\begin{prop}\label{prop:constraint}
    There does not exist a function $f(z)$ of $z \in \mathbb{C}$ such that 
    \begin{enumerate}
        \item (LCHS formula) For any $A(t) = L(t) + iH(t) \in \mathbb{C}^{N \times N}$ with $L(s) \succeq 0$ for $0 \leq s \leq t$, it holds
        \begin{equation}
        \mathcal{T} e^{-\int_0^t A(s) \ud s} = \int_{\mathbb{R}} \frac{f(k)}{ 1-ik} \mathcal{T} e^{-i \int_0^t (kL(s)+H(s)) \ud s} \ud k, 
        \end{equation}
        \item (Integrability) $\frac{f(k)}{1-ik} \in L^1(\mathbb{R})$, and 
        \item (Decay) $ |f(k)| \leq Ce^{-\gamma|k|}$ for constants $C>0$ and \(\gamma>0\).  
    \end{enumerate}
\end{prop}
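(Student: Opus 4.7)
The plan is to specialize the LCHS identity to a trivial scalar ODE, extract a necessary Fourier identity for $g(k) := f(k)/(1-ik)$, and then combine the integrability of $g$ with the exponential decay of $f$ to analytically continue that identity to the full real line, producing an unbounded function that is supposed to be the Fourier transform of an $L^1$ function---a contradiction. This follows the same direction as the necessary-condition computation around~\cref{eqn:LCHS_necessary_condition}, but uses it as an obstruction rather than a design principle.

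Concretely, I first apply assumption 1 in the scalar case $N = 1$, $A(s) \equiv 1$, so $L(s) \equiv 1 \succeq 0$ and $H(s) \equiv 0$. The time-ordered exponential collapses to $e^{-t}$, and the LCHS formula yields
\begin{equation}
G(t) := \int_{\mathbb{R}} g(k)\, e^{-ikt}\, \ud k = e^{-t}, \qquad t \geq 0.
\end{equation}
By assumption 2, $g \in L^1(\mathbb{R})$, so $G$ is bounded on $\mathbb{R}$ with $\|G\|_\infty \leq \|g\|_{L^1}$ (and in fact lies in $C_0(\mathbb{R})$ by Riemann--Lebesgue). Next I would extend $G$ to a holomorphic function on the open strip $\Omega_\gamma := \{z \in \mathbb{C} : |\Im z| < \gamma\}$: since $|1-ik| \geq 1$, assumption 3 gives $|g(k)| \leq C e^{-\gamma|k|}$, and for $z = t + i\sigma$ with $|\sigma| \leq \gamma' < \gamma$ the integrand $g(k) e^{-ikz}$ is dominated pointwise by $C e^{-(\gamma-\gamma')|k|}$, so Morera's theorem (via Fubini on closed contours) delivers analyticity on $\Omega_\gamma$.

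The contradiction is then immediate. The entire function $z \mapsto e^{-z}$ and the holomorphic $G$ both live on the connected set $\Omega_\gamma$ and coincide on the non-isolated set $\{t \geq 0\}$; by the identity theorem they agree throughout $\Omega_\gamma$, so $G(t) = e^{-t}$ for every real $t$. But $e^{-t} \to +\infty$ as $t \to -\infty$, violating the uniform bound $\|G\|_\infty \leq \|g\|_{L^1} < \infty$ established above.

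No genuine obstacle is anticipated; the argument is routine once the scalar specialization is spotted, and the only care required is the standard dominated-convergence/Morera bookkeeping for the analytic extension. The conceptual improvement over~\cref{lem:LCHS_constraint} is that analyticity of $f$ itself is never assumed---exponential decay of $f$ together with $L^1$-integrability of $g$ forces analyticity of the Fourier transform of $g$, and that is all the proof needs. Since $L^1$-integrability of $g$ is in any case a prerequisite for any LCU-based implementation, this rules out exponential-decay kernel functions for any such algorithm, independently of the FT-LCHS framework.
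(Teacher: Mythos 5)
Your proposal is correct and follows essentially the same route as the paper's proof: specialize to $N=1$, $A\equiv 1$ to get $\int_{\mathbb R} g(k)e^{-ikx}\ud k = e^{-x}$ for $x\ge 0$, analytically continue this Fourier transform to the strip $|\Im z|<\gamma$ via the exponential decay of $f$ (noting $|1-ik|\ge 1$) together with Fubini and Morera, and invoke the identity theorem to contradict the boundedness of the transform on the real line. The only cosmetic difference is that you derive the contradiction from $\|G\|_\infty\le\|g\|_{L^1}$ while the paper uses the bound $2C/(\gamma-|y|)$ obtained from the decay hypothesis; both are valid.
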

\begin{proof}
We will prove the claim by contradiction. 
Suppose that there exists such a function $f(z)$. 
Consider \( z = x + iy \) with \( |y| = |\operatorname{Im}(z)| < \gamma \). 
Let 
\begin{equation}
    g(k) = \frac{f(k)}{1-ik}, 
\end{equation}
and we define 
\begin{align}
    \Psi(z) = \int_{\mathbb{R}}g(k) e^{-ikz} \ud k. \label{eqn:analy_continue}
\end{align}
We bound the integrand of~\cref{eqn:analy_continue}  as
\begin{align}
    |g(k) e^{-ikz}| = |g(k)| e^{ky} \leq C e^{-\gamma|k|} e^{|k| |y|} = C e^{-|k|(\gamma - |y|)}. 
\end{align}
Then
\begin{align}\label{eqn:proof_constraint_eq1}
    |\Psi(z)|\leq C\int_{\mathbb{R}} e^{-|k|(\gamma - |y|)} \ud k = 2C \int_{0}^{\infty} e^{-k(\gamma - |y|)} \ud k = \frac{2C}{\gamma - |y|} < \infty,
\end{align}
so the function $\Psi(z)$ is well defined for $z \in \left\{ |\operatorname{Im}(z)| < \gamma \right\}$. 

Let \( \Gamma \) be any closed piecewise $C^1$ contour in $\left\{ |\operatorname{Im}(z)| < \gamma \right\}$. 
By the Fubini-Tonelli theorem, we have 
\begin{align}
    \oint_{\Gamma} \Psi(z) \ud z = \oint_{\Gamma} \left( \int_{\mathbb{R}} g(k) e^{-ikz} \ud k \right) \ud z = \int_{\mathbb{R}} g(k) \left( \oint_{\Gamma} e^{-ikz} \ud z \right) \ud k. 
\end{align}
Cauchy's integral theorem gives that \(\oint_{\Gamma} e^{-ikz} \, dz = 0 \) for all  \(k \in \mathbb{R}\), so
\begin{align}
   \oint_{\Gamma} \Psi(z) \ud z=0. 
\end{align}
Thus \( \Psi(z) \) is analytic in $\left\{ |\operatorname{Im}(z)| < \gamma \right\}$ by Morera's theorem. 

By the LCHS formula with dimension $N = 1$ and $A(t) \equiv 1$, we have that for all $x \geq 0$ 
\begin{equation}
    \Psi(x) = \int_{\mathbb{R}} g(k) e^{-i k x } \ud k = e^{-x},  
\end{equation}
which implies that $\Psi(z) = e^{-z}$ for all $z \in \left\{ |\operatorname{Im}(z)| < \gamma \right\}$ according to the identity theorem. 
In particular, $\Psi(x) = e^{-x}$ for all $x \in \mathbb{R}$. 
This contradicts the boundedness of $\Psi(x)$ in~\cref{eqn:proof_constraint_eq1} and completes the proof. 

\end{proof}

\section*{Acknowledgments}

DA acknowledges the support by the Innovation Program for Quantum Science and Technology via Project 2024ZD0301900, and the Fundamental Research Funds for the Central Universities, Peking University.

%%%%%%%%%%%%%%%%%%%%%%
\bibliographystyle{unsrt}
\bibliography{ref}
\end{document}